\newtheorem{lemma}{Lemma}
\newtheorem{definition}{Definition}
\newtheorem{observation}{Observation}
\newtheorem{conjecture}[lemma]{Conjecture}
\newtheorem{theorem}[lemma]{Theorem}
\begin{document}
\title{Proving the Pressing Game Conjecture on Linear Graphs
  \thanks{This paper presents the results of the undergraduate research
  of E. Bixby and T. Flint in the 2012 Fall semester at the Budapest
  Semesters in Mathematics}}

\author[1,2]{Eliot Bixby}
\author[1,3]{Toby Flint}
\author[1,4]{Istv\'an Mikl\'os}

\affil[1]{Budapest Semesters in Mathematics \\H-1071 Budapest,
  Bethlen G\'abor t\'er 2 \\ Hungary}
\affil[2]{Oberlin College \\ 37 W College St, Oberlin, OH 44074 \\ USA}
\affil[3]{Hillsdale College \\ 33 East College Street, Hillsdale, MI
  49242 \\ USA}
\affil[4]{R\'enyi Institute \\ H-1053 Budapest, Re\'altanoda utca
  13-15 \\ Hungary}

\maketitle

\begin{abstract}
The pressing game on black-and-white graphs is the following: Given a
graph $G(V,E)$ with its vertices colored with black and white, any
black vertex $v$ can be pressed, which has the following effect: (a)
all neighbors of $v$ change color, i.e. white neighbors become black and
\emph{vice versa}, (b) all pairs of neighbors of $v$ change
connectivity, i.e. connected pairs become unconnected, unconnected ones
become connected, (c) and finally, $v$ becomes a separated white
vertex. The aim of the game is to transform $G$ into an all white,
empty graph. It is a known result that the all white empty graph is
reachable in the pressing game if each component of $G$ contains at
least one black vertex, and for a fixed graph, any successful
transformation has the same number of pressed vertices. 

The pressing game conjecture is that any successful pressing path can
be transformed into any other successful pressing path with small
alterations. Here we prove the conjecture for linear graphs. The
connection to genome rearrangement and sorting signed permutations
with reversals is also discussed. 
\end{abstract}

Keywords: AMS MSC: 05A05 -- Permutations, words, matrices, 05Cxx --
Graph theory, free keywords: bioinformatics, sorting by reversals, pressing game

\section{Introduction}
Sorting signed permutations by reversals (or inversions as biologists
call it) is the first genome rearrangement model introduced in the
scientific literature. The hypothesis that reversals change the order
and orientation of genes -- so-called genetic factors in that time --
arose in a paper published in 1921 \cite{sturtevant21} and implicitly
was verified by microscopic inferring of chromosomes a couple of
decades later \cite{sn41}. In the same time, geneticists realized that
``The mathematical properties of series of letters subjected to the
operation of successive inversions do not appear to be worked out''
\cite{st37}. This computational problem has been rediscovered at the
end of the XX. century, and the solution to it, now called as the
Hannenhalli-Pevzner theorem has been published in 1995 and 1999
\cite{hp99}. 

The Hannenhalli-Pevzner theorem gives a polynomial running time
algorithm to find one scenario with the minimum number of reversals
necessary to sort a signed permutation. However, there might be
multiple solutions, and the number of solutions typically grows 
exponentially with the length of the permutation. Therefore, a(n
almost) uniform sampler is required which gives a set of solutions
from which statistical properties of the solutions can be
calculated. A typical approach for sampling is the Markov chain Monte
Carlo method. It starts with an arbitrary solution, and applies random
perturbations on it thus exploring the solution space. In case of most
parsimonious reversal sorting scenarios, two approaches are considered as
perturbing the current solution:
\begin{itemize}
\item The first approach encodes the most parsimonious reversal
  sorting scenarios with the intermediate permutations visited. Then
  it cuts out a random window from this path, and gives a random new
  sorting scenario between the permutations at the beginning and end of
  the window
\item The second approach encodes the scenarios with the series of
  mutations applied, and perturbs them in a sophisticated way,
  described in details later in this paper.
\end{itemize}
A Markov chain for sampling purposes should fulfill two conditions:
{\bf (a)} it must converge to the uniform distributions of all
possibilities, and a necessary condition for it that it must be
irreducible, namely, from any solution the chain must be able to get
to any another solution and {\bf (b)} the convergence must be fast.

The problem with the first approach mentioned above is that it is
provenly slowly mixing \cite{mms2010}. This means that the necessary
number of steps in the Markov chain to get sufficiently close to the
uniform distribution grows exponentially with the length of the size
of the permutation. Therefore this approach is not applicable in
practice. 

The problem with the second approach is that we even do not know if it
is irreducible, nor that it is rapidly mixing. In this paper, we want
to take a step towards proving that it is an irreducible Markov chain.

The paper is organized in the following way. In Section~\ref{sec:pre},
we define the problem of sorting by reversals, and the combinatorics
tools necessary: the graph of desire and reality and the overlap
graph. Then we introduce the pressing game on the black-and-white
graphs, and show that they correspond to the shortest reversal
scenarios in case of a biologically important subset of
permutations. We finish the section for stating the pressing game
conjecture. If this was proven then this would give a proof for the
irreducibility of the Markov chain applying the above mentioned second
approach. In Section~\ref{sec:proof},  we prove the conjecture for
linear graphs. The paper is finished with a discussion and
conclusions. 

\section{Preliminaries}\label{sec:pre}

\begin{definition}
A \emph{signed permutation} is a permutation of numbers from $1$ to
$n$, where each number has a $+$or $-$ sign.
\end{definition}
While the number of $n$ long permutations is $n!$, the number of $n$
long signed permutations is $2^n\times n!$.
\begin{definition}
A \emph{reversal} takes any consecutive part of a signed permutation
and change both the order of the numbers and the sign of each
number. It is also allowed that a reversal takes only one single
number from the signed permutations, in that case, it changes the sign
of this number.
\end{definition}
For example, the following reversal flips the $-3\ +6\ -5\ +4\ +7$
segment:

$$+8\ -1\ -3\ +6\ -5\ +4\ +7\ -9\ +2 \Rightarrow +8\ -1\ -7\ -4\ +5\
-6\ +3 -9\ +2$$

The sorting by reversals problem asks for the minimum number of
reversals necessary to transform a signed permutation into the
identity permutation, ie. the signed permutation $+1\ +2\ \ldots\
+n$. This number is called the reversal distance, and the reversal
distance of a signed permutation $\pi$ is denoted by $d_{REV}(\pi)$.
To solve this problem, we have to introduce two discrete
mathematical objects, the graph of desire and reality and the overlap
graph. The graph of desire and reality is a drawn graph, ie. not only
the topology (which vertices are connected) but also its drawing
matter. The overlap graph is a graph in terms of standard graph theory.

The graph of desire and reality of a signed permutation can be
constructed in the following way. Each signed number is replaced with
two unsigned number, $+i$ becomes $2i-1, 2i$, $-i$ becomes $2i,
2i-1$. The so-obtained $2n$ long permutation is framed between $0$ and
$2n+1$. Each number including $0$ and $2n+1$ will represent one vertex
in the graph of desire and reality. They are drawn in the same order
as they appear in the permutation, so the graph of desire and reality
is not only a graph, its drawing is also important.

We index the positions of the vertices starting with $1$, and each
pair of vertices in positions $2i-1$ and $2i$ are connected with an
edge. We call these edges the reality edges. Each pair of vertices for
numbers $2i$ and $2i+1$, $i = 0, 1,\ldots n$ are  connected with an
arc, and they are named the desire edges. The explanation for these
names is that the reality edges describes what we see in the current
permutation, and the desire edges tell what neighbors we would like to
see to get the $+1, +2, \ldots +n$ permutation: we would like that $1$
be next to $0$, $3$ be next to $2$, etc.

The overlap graph is constructed from the graph of desire and reality
in the following way. The vertices of the overlap graph are the desire
edges in the graph of desire and reality. The vertices are colored, a
vertex in the overlap graph is black if the number of vertices below
the desire edge it represents in the graph of desire and reality is
odd. A vertex is white if the number of vertices is even. Two vertices
are connected if the intervals that the corresponding desire edges
span overlap but neither contain the other. On
Figure~\ref{fig:example}, we give an example for the graph of desire
and reality and overlap graph.
\begin{figure}
\center{
\includegraphics[angle=0, width=2.9in]{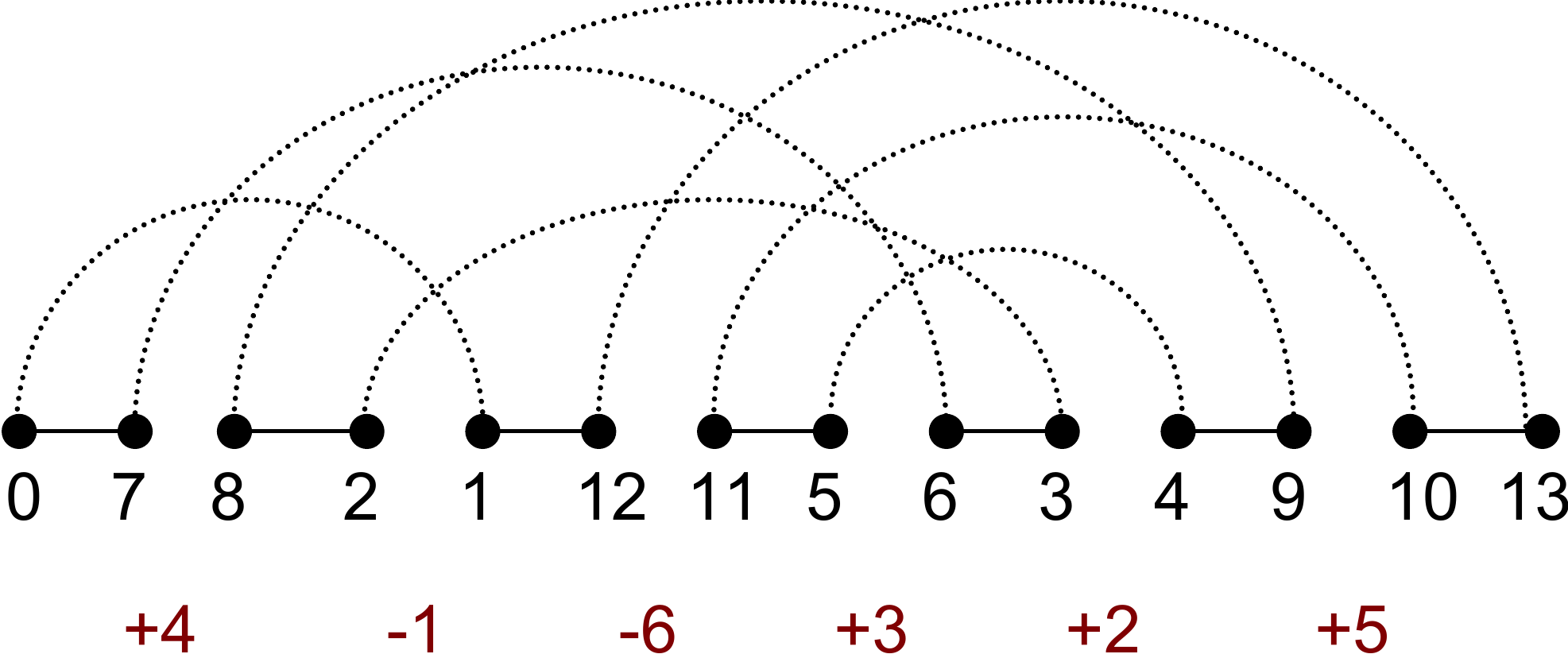}
\includegraphics[angle=0, width=2in]{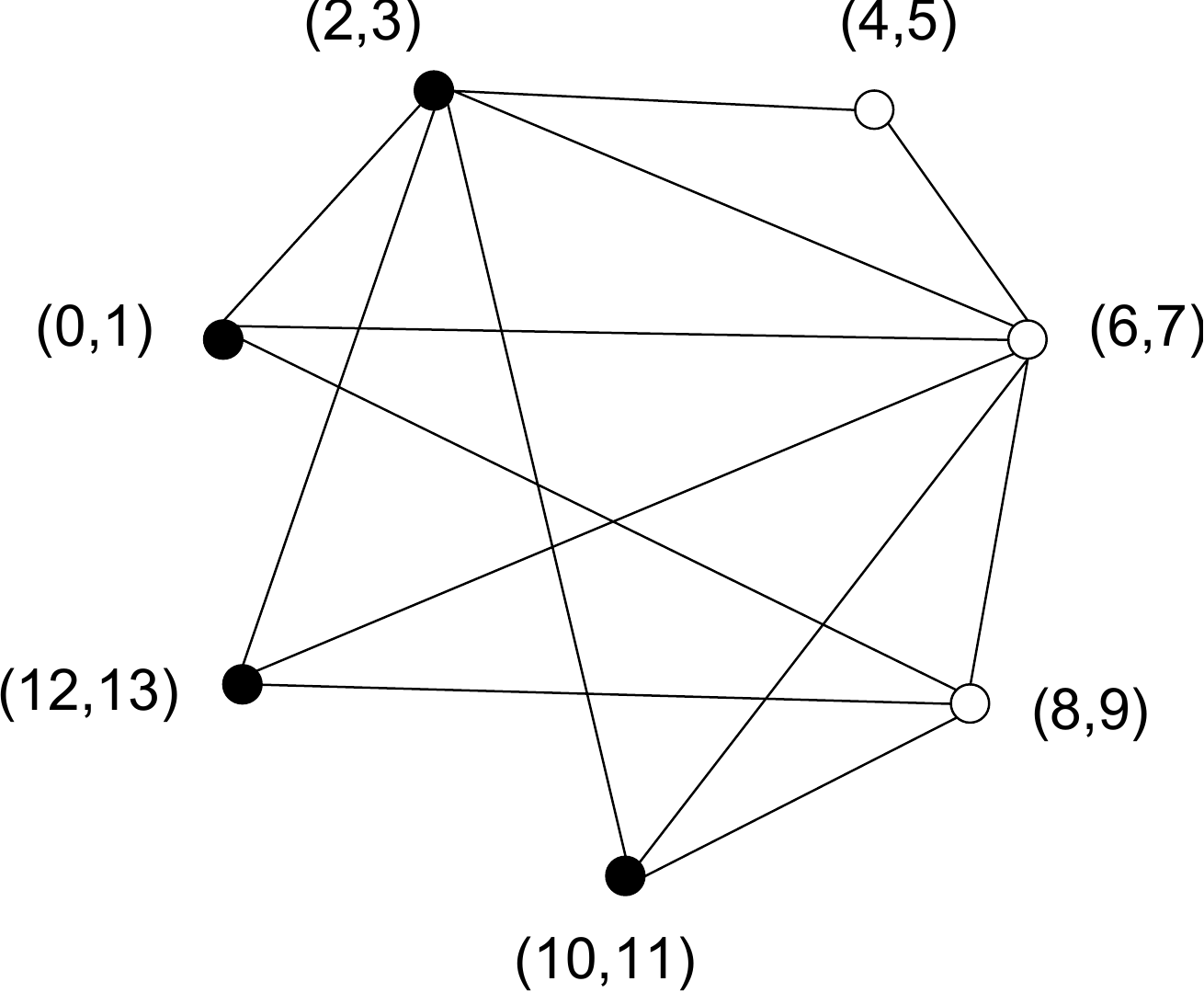}}
\caption{The graph of desire and reality and the overlap graph of the
  signed permutation $+4\ -1\ -6\ +3\ +2\ +5$}\label{fig:example}
\end{figure}

The overlap graph might fall into components. A vertex, as well as its
corresponding desire edge is called oriented if the vertex is black,
namely, the corresponding desire edge spans odd number of vertices. A
vertex and its corresponding  desire edge is unoriented if the vertex is
white. A component is called oriented if it contains at least one
black vertex, if the component contains only white vertices, it is
called unoriented. A component is non-trivial if it contains more than
one vertex. Some of the non-trivial unoriented components are
hurdles. We skip the precise definition of hurdles here as we do not
need it. A permutation is called fortress, if the number of its hurdles
is odd, with some prescribed properties, also not detailed here.

Any reversal changes the topology of the graph of desire and reality
on two reality vertices. Any desire edge is a neighbor of two reality
edges, and we say that the reversal acts on this desire edge if it
changes the topology on the two neighbor reality edges.

How do such reversals change the graph of desire and reality and thus
the overlap graph? We set up a Lemma below explaining this. 

\begin{lemma}\label{lem:press}
Let $v$ be an orientd desire edge on which the reversal acts. Then the
reversal
\begin{itemize}
\item change the orientation of any desire edge crossing $v$
\item change the overlap of any pair of desire edges crossing $v$
\item the desire edge itself become an unoriented edge without any
  overlap with any other edges.
\end{itemize}
\end{lemma}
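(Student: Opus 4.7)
The plan is to work directly in the doubled framed permutation and track how positions move under the reversal. Let $v$ connect the numbers $2i$ and $2i+1$, sitting at positions $p<q$ in the framed doubled permutation. Since $v$ is oriented, the count $q-p-1$ of vertices strictly between $p$ and $q$ is odd, so $p$ and $q$ share the same parity. The two reality edges adjacent to $v$ therefore sit at $(p,p+1),(q,q+1)$ if $p,q$ are odd, and at $(p-1,p),(q-1,q)$ if $p,q$ are even. A signed reversal is just a reversal of a contiguous block of positions in the doubled permutation, and the one that acts on $v$ reverses exactly the block between these two reality edges — call it $[\alpha,\beta]$, equal to $[p+1,q]$ or $[p,q-1]$ respectively. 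In both cases $\alpha+\beta$ is odd, and the reversal sends a position $r\in[\alpha,\beta]$ to $\alpha+\beta-r$. This single parity-breaking reflection drives the whole argument.

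For the third bullet, the endpoint of $v$ that lay inside $[\alpha,\beta]$ gets sent to the opposite end of the block, so the two endpoints of $v$ land at an adjacent pair of positions forming a reality edge. Hence the span of $v$ contains zero vertices, so $v$ is unoriented, and since nothing lies strictly between its endpoints no other desire edge can cross $v$.

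For the first bullet, pick a desire edge $u$ crossing $v$: exactly one endpoint $r$ of $u$ lies in $(p,q)\subseteq[\alpha,\beta]$, while the other endpoint $s$ lies outside $[\alpha,\beta]$ and is fixed by the reversal. The reflection sends $r$ to $r'=\alpha+\beta-r$, which has opposite parity to $r$ since $\alpha+\beta$ is odd. A desire edge is oriented iff its two endpoint positions have the same parity, so flipping the parity of a single endpoint flips $u$'s orientation. For the second bullet, take two crossers $u_1,u_2$ with inside endpoints $r_1,r_2\in[\alpha,\beta]$ and outside endpoints $s_1,s_2$. The same reflection fixes the $s_i$ but reverses the relative order of $r_1$ and $r_2$. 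In the linear order on positions, the four endpoints split into the two inside ones (consecutive, since every outside endpoint sits strictly to the left of $\alpha$ or strictly to the right of $\beta$) and the two outside ones. Whether two intervals on a line interleave is determined purely by the linear order of their endpoints, and transposing two adjacent endpoints belonging to different intervals toggles the interleaving pattern. Since $r_1$ and $r_2$ belong to different desire edges, the swap toggles the overlap of $u_1$ and $u_2$.

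The main technical obstacle is the bookkeeping: correctly pinning down the reversed block $[\alpha,\beta]$ from the parity of $p$ and checking that this is genuinely the block flipped by the signed reversal that acts on $v$. Once the reflection formula $r\mapsto\alpha+\beta-r$ with $\alpha+\beta$ odd is in hand, all three bullets reduce to elementary parity and order arguments.
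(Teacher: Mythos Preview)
Your argument is correct and takes a genuinely different route from the paper. The paper's proof is pictorial and case-based: it observes that the reversal ``flips one leg'' of each crossing desire edge to get the orientation change, and then handles the overlap change by a figure-driven enumeration of the two ways a pair of crossers can sit relative to $v$ (entering from opposite ends versus nested from the same end). Your proof replaces this picture with a single algebraic mechanism: the reflection $r\mapsto\alpha+\beta-r$ on the reversed block, together with the observation that $\alpha+\beta$ is odd precisely because $v$ is oriented. From there, the first bullet is a one-line parity flip, the second bullet becomes the clean combinatorial fact that transposing two adjacent endpoints from different intervals toggles interleaving, and the third bullet is a direct computation. What you gain is a figure-free, uniform argument that makes the role of orientedness of $v$ completely explicit (it is exactly what forces $\alpha+\beta$ odd); what the paper's approach offers instead is immediate geometric intuition for why each case behaves as claimed.
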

\begin{proof}
The reversal flips one of the 'legs' of each overlapping desire edge,
namely, the reality edge connected to the desire edge. Therefore it
changes the parity of the vertices below the desire edge and thus the
orientation of it.

Two edges which are both overlap with $v$ but not with each other, can
overlap only from the two ends of $v$, see also Fig.~\ref{fig:click},
case {\bf I}. A reversal acting on $v$ will
flip one-one of their endpoints, so they will indeed overlap. If two
edges overlap with $v$, but by definition not with each other since
the interval of one of them contains the interval of the other, then
they come from one end of $v$. It is easy to see that after the
reversal they will overlap by definition, see Fig.~\ref{fig:click},
case {\bf II}. It is also easy to see that
overlapping pair of edges which are also overlap with each other are
the cases on the right hand side of Fig.~\ref{fig:click}, so after the
reversal, they will not overlap.

Finally, the oriented edge on which the reversal acts becomes an
unoriented edge forming a small cycle with a reality edge, and thus it
cannot overlap with any other desire edge.

\begin{figure}
\center{
\includegraphics[angle=0, width=3in]{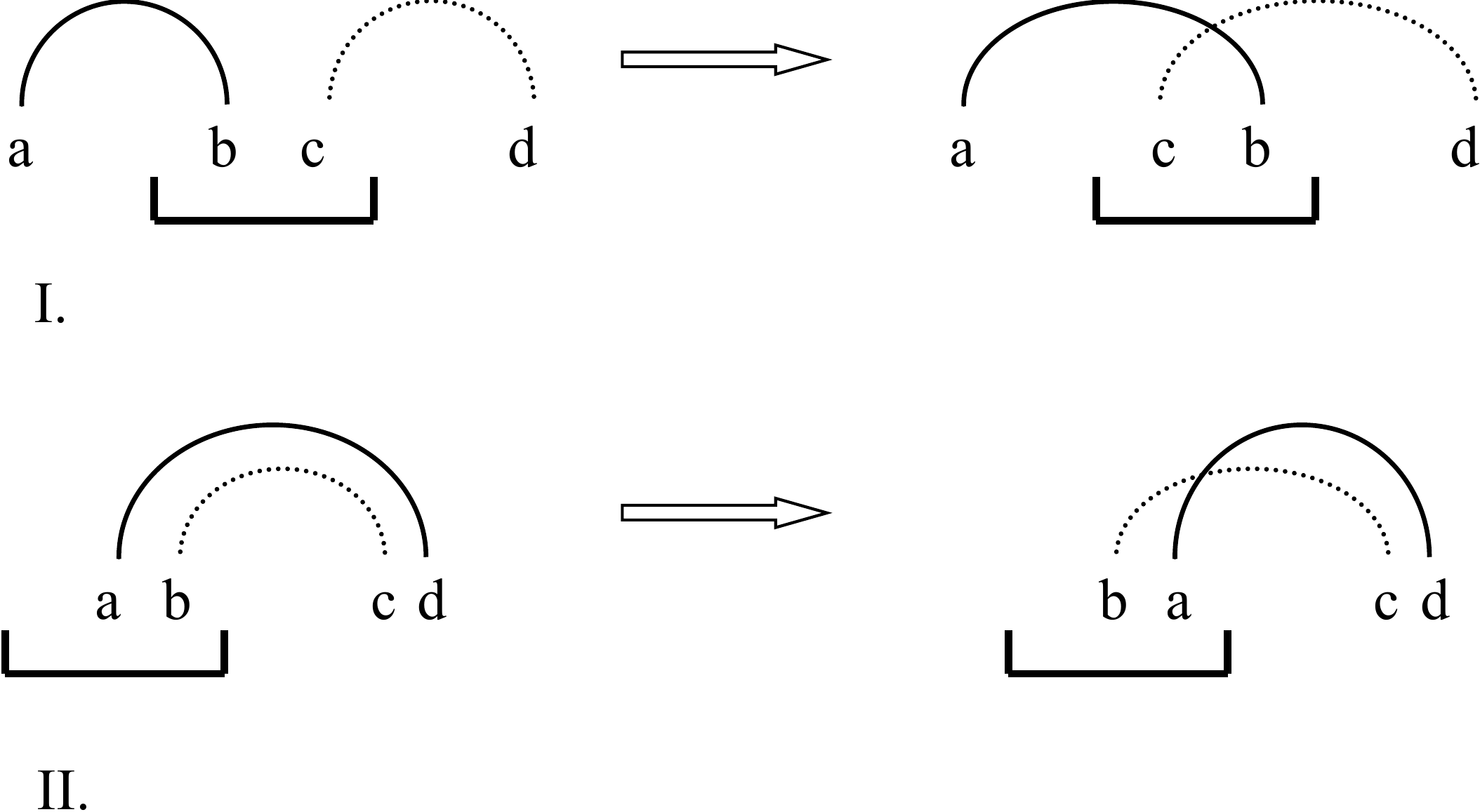}}
\caption{This picture show how a reversal can change the overlap of
  two desire edges. The reverted fragment is indicated with a thick
  black line.}\label{fig:click}
\end{figure}
\end{proof}

This lemma also shows the connection between sorting by reversals and
the pressing game on black and white graphs: a pressing of a black
vertex is equivalent with a reversal acting on the corresponding
desire edge. Below we define the pressing game on black-and-white
graphs:
\begin{definition}
Given a graph $G(V,E)$ with its vertices colored with black and
white. Any black vertex $v$ can be pressed, which has the following
effect: (a) all neighbors of $v$ change color, white neighbors become
black and \emph{vice versa}, (b) all pair of neighbors of $v$ change
connectivity, connected pairs become unconnected, unconnected ones
become connected, (c) and finally, $v$ becomes a separated white
vertex. The aim of the game is to transform $G$ into an all white,
empty graph. 
\end{definition}

If each component of $G$ contains at least one black vertex, then the
pressing game always has at least one solution, as it turns out from
the Hannenhalli-Pevzner theorem. 

\begin{theorem}
{\bf (Hannenhalli-Pevzner), \cite{hp99}}
$$d_{REV}(\pi) = n+1-c(\pi)+h(\pi)+f(\pi)$$
where $n$ is the length of the permutation $\pi$, $c(\pi)$ is the
number of cycles in the graph of desire and reality, $h(\pi)$ is the
number of hurdles in the permutation and $f(\pi)$ is the fortress
indicator, it is $1$ if the permutation is a fortress, otherwise $0$.
\end{theorem}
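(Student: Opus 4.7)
The plan is to establish the two inequalities $d_{REV}(\pi) \ge n+1-c(\pi)+h(\pi)+f(\pi)$ and $d_{REV}(\pi) \le n+1-c(\pi)+h(\pi)+f(\pi)$ separately, with the lower bound coming from invariants and the upper bound coming from an explicit sorting algorithm.

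For the lower bound, I would start from the cycle count in the graph of desire and reality. The identity permutation has exactly $n+1$ trivial cycles (each desire edge closes with its reality edge), and any single reversal alters only two reality edges, so a case analysis shows it can change $c(\pi)$ by at most $+1$. This already gives $d_{REV}(\pi) \ge n+1-c(\pi)$. To push the bound up by $h(\pi)+f(\pi)$, I would track unoriented components: using Lemma~\ref{lem:press} applied in reverse, a reversal acting inside an oriented component cannot remove a hurdle elsewhere, and the only ways a reversal can decrease the number of hurdles are (i) merging two hurdles into a non-hurdle (killing two hurdles at once) or (ii) cutting a single hurdle (killing one hurdle but creating an oriented component). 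A parity/accounting argument then forces at least $h(\pi)$ extra reversals, with an additional $+1$ precisely in the fortress case, where the number of hurdles is odd and their cyclic arrangement obstructs pairing all of them.

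For the upper bound, I would give an inductive algorithm. If there is an oriented component, pick a \emph{safe} reversal --- one acting on an oriented desire edge that does not create any new hurdle; the existence of such a safe reversal whenever an oriented component is present is the central combinatorial lemma here, and it is where the bulk of the technical work lives. Such a reversal increases $c(\pi)$ by $1$ without changing $h(\pi)$ or $f(\pi)$, so induction closes the step. When only unoriented components remain, merge two hurdles by a single reversal that creates an oriented component (reducing $h$ by $2$ at a cost of only $1$ reversal in excess of the cycle bound, which matches); in the fortress case, cut a hurdle instead to escape the parity trap, paying the extra $+1$ charged by $f(\pi)$.

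The main obstacle is the \emph{safe reversal lemma}: proving that whenever an oriented component exists, one can always find an oriented reversal that neither destroys the cycle-count gain nor turns an existing non-hurdle into a hurdle. This requires a careful structural analysis of the overlap graph after the press operation described in Lemma~\ref{lem:press}, and it is the only step that really uses the detailed definition of hurdles we have omitted. The fortress handling is a delicate but essentially bookkeeping refinement on top of the same framework, ensuring that the parity of unoriented components lines up with the $+1$ accounted for by $f(\pi)$.
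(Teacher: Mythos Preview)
The paper does not prove this theorem at all: it simply states the Hannenhalli--Pevzner formula and cites \cite{hp99} for the proof, then immediately moves on to use it as a black box. So there is no ``paper's own proof'' to compare against; your sketch is a faithful outline of the standard argument from the original source (lower bound via the $c(\pi)$ invariant refined by hurdle/fortress accounting, upper bound via safe reversals and hurdle merging/cutting), but none of that appears in this paper. For the purposes of this paper you should treat the statement as quoted without proof.
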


It is easy to see that any reversal can increase the number of cycles
in the graph of desire and reality at most by $1$, hence the
Hannenhalli-Pevzner theorem also says if a permutation does not
contain any hurdle (and thus it is not a fortress) then any optimal
reversal sorting path increases the number of cycles to $n+1$ without
creating any hurdle. Below we state this theorem.
\begin{theorem}\label{theo:safe-reversal}
Let $\pi$ be a permutation which is not the identical permutation and
whose overlap graph does not contain any non-trivial unoriented
component. Then a reversal exists that acts on an oriented desire
edge, thus increases $c(\pi)$ by $1$ and does not create any
non-trivial unoriented component. 

Furthermore, if $G$ is an arbitrary  black-and-white graph such that
each component contains at least one black vertex, then at least one
black vertex can be pressed without making a non-trivial unoriented
component. 
\end{theorem}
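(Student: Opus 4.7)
My plan is to first reduce the statement about reversals to the statement about black-and-white graphs by invoking Lemma~\ref{lem:press}. Under that lemma, a reversal acting on an oriented desire edge $v$ is exactly the operation of pressing the black vertex $v$ in the overlap graph of $\pi$, and the notion of ``non-trivial unoriented component'' transfers verbatim between the two pictures. That such a reversal automatically increases $c(\pi)$ by exactly $1$ is a standard, separate property of the graph of desire and reality (an oriented reversal splits a cycle into two) and does not interact with the safety claim. Hence once the graph statement is proved in full generality, applying it to the overlap graph of $\pi$ immediately delivers the reversal statement.

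For the graph statement, I would work inside a single connected component $C$ of $G$ containing at least one black vertex, since pressing a vertex in $C$ affects only that component. Assume for contradiction that every black vertex of $C$ creates at least one new non-trivial unoriented component when pressed. The preparatory step is a structural observation: if pressing $v$ creates a non-trivial unoriented component $U$, then $U$ must meet $N(v)$, and in fact must contain a vertex of $N(v)$ that was black in $G$. The reason is that pressing only toggles edges with both endpoints in $N(v)$ and flips colors of $N(v)$; so if $U$ were disjoint from $N(v)$, its induced subgraph and all its external edges to $V(G) \setminus (N(v) \cup \{v\})$ would be unchanged by the press, forcing $U$ to have already been an all-white connected component of $G$, contradicting the hypothesis that every component of $G$ is oriented.

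The core of the proof is then an extremal argument. I would pick $v \in C$ minimizing a suitable lexicographic invariant $\Phi(v)$, for instance the total size of new non-trivial unoriented components produced by pressing $v$, refined by their count. By the contradiction hypothesis $\Phi(v) \geq 1$; let $U$ be one of these components and pick $b \in U \cap N(v)$ that was black in $G$, as guaranteed by the structural observation. Pressing $b$ instead of $v$ in the original $G$ should yield a strictly smaller value of $\Phi$, contradicting the choice of $v$. The main obstacle is precisely this comparison: because $N(v)$ and $N(b)$ overlap, the two press operations interact nontrivially, and one needs a careful case split over vertices lying in $N(v) \cap N(b)$, $N(v) \setminus N(b)$, $N(b) \setminus N(v)$, and outside both, tracking both the color flips and the edge toggles in parallel to show that the unoriented components after pressing $b$ are genuinely smaller. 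If this extremal argument proved unwieldy, a fallback I would attempt is induction on $|V(G)|$: press one carefully chosen black vertex, apply induction to each component of the result, and use the Hannenhalli--Pevzner formula as a bookkeeping device to certify that the press was safe.
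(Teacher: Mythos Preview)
The paper does not actually prove this theorem: immediately after the statement it writes ``The proof can be found in \cite{bergeron01}, and we skip it here. The proof considers only the overlap graph, and in fact, it indeed works for every black-and-white graph.'' So there is no in-paper argument to compare against beyond the remark that Bergeron's overlap-graph proof already establishes the general black-and-white graph version, which is exactly the reduction you make in your first paragraph via Lemma~\ref{lem:press}. That reduction, and your structural observation that any newly created non-trivial all-white component $U$ must contain a neighbour of $v$ that was black in $G$, are correct and are precisely the opening moves of Bergeron's argument.

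Where your proposal falls short is that it stops at the point where the actual work begins. You choose an invariant $\Phi(v)$ (total size of the bad components created by pressing $v$), assume it is minimised at $v$, pick $b\in U\cap N(v)$ black, and then assert that ``pressing $b$ instead of $v$ \emph{should} yield a strictly smaller value of $\Phi$''---but you immediately concede that ``the main obstacle is precisely this comparison'' and leave the case analysis undone. That comparison is the entire content of the theorem; without it you have a plan, not a proof. Bergeron's proof does not use your $\Phi$ but a simpler scalar (essentially the count of unoriented neighbours of a black vertex), and the crux is a short but non-obvious argument that every white neighbour of $v$ is already a neighbour of $b$, together with the production of one extra white neighbour of $b$ outside $N(v)$. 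Your fallback is also problematic: invoking the Hannenhalli--Pevzner formula to certify safety is circular, since the standard derivations of that formula for hurdle-free permutations rely on exactly the existence of a safe oriented reversal that you are trying to establish. If you want a self-contained proof, you need to pin down a concrete invariant and actually carry out the inequality $\Phi(b)<\Phi(v)$; the cleanest route is to follow \cite{bergeron01} directly.
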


The proof can be found in \cite{bergeron01}, and we skip it here. The
proof consider only the overlap graph, and in fact, it indeed works for every
black-and-white graph. A clear consequence is the following theorem.

\begin{theorem} Let $G$ be a black-and-white graph such that each
component on it contains at least one black vertex. Then $G$ can be
transformed into the all-white empty graph in the pressing game.
\end{theorem}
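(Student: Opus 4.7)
The proof is essentially a one-line corollary of Theorem~\ref{theo:safe-reversal}, so my plan is to formalize the ``keep pressing safely'' idea by induction. I would slightly rephrase the invariant to the form \emph{$G$ contains no non-trivial unoriented component}. This is implied by the stated hypothesis that every component contains a black vertex, it is exactly what Theorem~\ref{theo:safe-reversal} promises to preserve under a well-chosen press, and crucially it is still strong enough to re-apply Theorem~\ref{theo:safe-reversal}: any isolated white vertex produced along the way is a trivial component that does not disturb the hypothesis on the non-trivial part of the graph.

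As the induction measure I would take $\phi(G)$, the number of vertices of $G$ that are \emph{not} isolated white. The base case $\phi(G)=0$ says every vertex is isolated and white, i.e.\ $G$ is already the all-white empty graph, so there is nothing to do. For the inductive step, Theorem~\ref{theo:safe-reversal} supplies a black vertex $v$ whose press produces a graph $G'$ with no non-trivial unoriented component, so the invariant is preserved. After pressing, $v$ itself becomes isolated and white, and any vertex that was already isolated and white remains so, because pressing only modifies colors and incidences among the neighbors of $v$ and an isolated vertex has no neighbors. Consequently $\phi(G')\le\phi(G)-1$, and the induction hypothesis transforms $G'$ into the all-white empty graph.

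The only subtlety worth checking is the strict decrease of $\phi$ and the preservation of the invariant; both follow at once from the locality of the press operation and from Theorem~\ref{theo:safe-reversal} respectively. I do not expect any real obstacle: all of the substantive work is already absorbed into Theorem~\ref{theo:safe-reversal}, and what remains is just a tidy bookkeeping induction.
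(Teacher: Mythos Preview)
Your proposal is correct and is essentially the same argument as the paper's: both iterate Theorem~\ref{theo:safe-reversal}, observe that the pressed vertex becomes (and stays) an isolated white vertex, and track the strictly decreasing count of vertices lying in non-trivial components. Your version is slightly more explicit in naming the induction measure $\phi(G)$ and in noting that the invariant should be read as ``no non-trivial unoriented component'' so that newly created isolated white vertices do not interfere with re-applying Theorem~\ref{theo:safe-reversal}, but this is exactly the content of the paper's proof.
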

\begin{proof}
It is sufficient to use iteratively
Theorem~\ref{theo:safe-reversal}. Indeed, according to
Theorem~\ref{theo:safe-reversal}, we can find a black vertex $v$, such
that pressing it does not create a non-trivial all-white component, on
the other hand, $v$ become a separated white vertex, and it will
remain a separated white vertex afterward. Hence, the number of
vertices in non-trivial components decreases at least by one, and in a
finite number of steps, $G$ is transformed into the all-white, empty graph.
\end{proof}

Consider the set of vertices as an alphabet, any sequence over this
alphabet is called a pressing path. It is a valid pressing path when
each vertex is black when it is pressed, and it is successful, if it
is valid and leads to the all-white, empty graph. The length of the
pressing path is the number of vertices pressed in it. The following
theorem is also true. 

\begin{theorem}
Let $G$ be a black-and-white graph such that each
component on it contains at least one black vertex. Then each
successful pressing path of $G$ has the same length.
\end{theorem}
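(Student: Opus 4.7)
The plan is to exhibit a numerical invariant of $G$ that decreases by exactly one after each press and that vanishes precisely at the all-white empty graph. The invariant I would use is the $\mathbb{F}_2$-rank of the ``coloured adjacency matrix'' of $G$. To that end I would first define the symmetric matrix $M_G\in\mathbb{F}_2^{V\times V}$ whose diagonal entry at $v$ is $1$ iff $v$ is black, and whose off-diagonal entry at $\{u,v\}$ is $1$ iff $uv\in E$. The all-white empty graph corresponds precisely to $M_G=0$, so every successful pressing path terminates at the zero matrix, and pressing is allowed at $v$ exactly when $(M_G)_{vv}=1$.

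Next I would rewrite the three bullets in the definition of pressing as a single matrix operation: pressing $v$ sends $M_G$ to the matrix $M'$ given by $M'_{vv}=0$, $M'_{vj}=M'_{jv}=0$ for $j\neq v$, and $M'_{ij}=(M_G)_{ij}+(M_G)_{iv}(M_G)_{vj}$ for $i,j\neq v$. The three defining effects of pressing correspond to the three cases of this single formula: the diagonal case $i=j\neq v$ gives $M'_{ii}=(M_G)_{ii}+(M_G)_{iv}$, flipping exactly the colours of the neighbours of $v$; the off-diagonal case flips $(M_G)_{ij}$ iff both $i,j$ are neighbours of $v$; and the zeroing of row and column $v$ is the isolation of $v$ as a white vertex.

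With this identification in hand, the rank drop becomes the standard Schur-complement calculation. Writing $M_G$ in block form with $v$ first, $M_G=\bigl(\begin{smallmatrix}1 & b^T\\ b & C\end{smallmatrix}\bigr)$, the $(1,1)$-block is the unit of $\mathbb{F}_2$, so the Schur complement is $C+bb^T$ and satisfies $\mathrm{rank}(M_G)=1+\mathrm{rank}(C+bb^T)$. By construction this Schur complement equals $M'$ restricted to the non-$v$ rows and columns, while the $v$-row and $v$-column of $M'$ are zero, hence $\mathrm{rank}(M')=\mathrm{rank}(M_G)-1$. Iterating, every successful pressing path has length exactly $\mathrm{rank}_{\mathbb{F}_2}(M_G)$, which depends only on $G$ and not on the path.

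The step I expect to be the main obstacle is the translation in the second paragraph: verifying without sign or casework errors that the single algebraic formula $(M_G)_{ij}+(M_G)_{iv}(M_G)_{vj}$ simultaneously encodes the colour change at neighbours, the adjacency flip at pairs of neighbours, and the isolation of $v$. Once this lemma is in place, the Schur-complement rank identity is entirely routine and the theorem follows immediately, with the bonus that the common length is identified as $\mathrm{rank}_{\mathbb{F}_2}(M_G)$.
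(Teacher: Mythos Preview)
Your argument is correct: encoding the coloured graph as the symmetric matrix $M_G$ over $\mathbb{F}_2$ and recognising a press at a black vertex as a Schur-complement step (which drops the rank by exactly one, since the pivot is the unit $1\in\mathbb{F}_2$) gives immediately that every successful pressing path has length $\mathrm{rank}_{\mathbb{F}_2}(M_G)$. The translation you flagged as the ``main obstacle'' is indeed routine once written as you did; the diagonal case uses $x^2=x$ in $\mathbb{F}_2$, and the rest is verbatim.

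As to comparison with the paper: the paper does not supply its own proof of this theorem but defers to \cite{hv2006}, whose framework is exactly the linear-algebraic one you reconstructed (the authors themselves later allude to ``linear algebraic techniques similar to that one used in \cite{hv2006}''). So your proof is not an alternative route but essentially a clean reconstruction of the cited argument, with the added benefit that you make the identity ``common length $=\mathrm{rank}_{\mathbb{F}_2}(M_G)$'' explicit.
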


The proof can be found in \cite{hv2006}. We are ready to state the
pressing path conjecture.

\begin{conjecture}
Let $G$ be a black-and-white graph such that each
component on it contains at least one black vertex. Construct a
metagraph, $M$ whose vertices are the successful pressing paths on
$G$. Connect two vertices if the length of the longest common
subsequence of the pressing paths they represent is at most $4$
less than the common length of the pressing paths. The conjecture is
that $M$ is connected.
\end{conjecture}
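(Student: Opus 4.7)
The plan is to proceed by strong induction on the common length $n$ of all successful pressing paths (which is well-defined by the length-invariance theorem just stated) and to identify a minimal set of \emph{elementary moves} on pressing paths, each producing an edge in $M$. Given two successful pressing paths $P=p_1\cdots p_n$ and $Q=q_1\cdots q_n$ of a graph $G$, the strategy is to transform them via elementary moves so that they share a common first pressed vertex, and then apply the inductive hypothesis to the strictly smaller graph obtained after that first press.

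First I would catalog the elementary moves. The simplest is a \emph{commutation}: if two consecutive presses $p_ip_{i+1}$ correspond to vertices that are both black and non-adjacent in the graph reached just before position $i$, then swapping them yields another successful pressing path whose longest common subsequence with the original has length at least $n-1$. The delicate case is when $p_i$ and $p_{i+1}$ are adjacent: Lemma~\ref{lem:press} shows that pressing $p_i$ toggles both the orientation and the connectivity of $p_{i+1}$, so a naive swap is illegal, and one must instead replace a window of up to four consecutive presses by another window with the same net effect on the graph. Establishing that length-$4$ windows always suffice is the main obstacle of the proof, and matches exactly the bound $4$ appearing in the definition of $M$.

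Second, I would prove the key bridging lemma: \emph{any two successful pressing paths of $G$ can be transformed, via edges of $M$, into pressing paths sharing the same first pressed vertex}. The plan is to fix some vertex $v$ and ``bubble'' it to the front of both $P$ and $Q$, using commutation moves when possible and the four-element braid moves when $v$ must cross a pressed vertex it is currently adjacent to. A natural choice of $v$ is a pressable black vertex common to the first few positions of both $P$ and $Q$; the safe-reversal result, Theorem~\ref{theo:safe-reversal}, combined with Lemma~\ref{lem:press}, lets us track each local rewrite on the intermediate overlap graph and verify that successfulness is preserved.

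Finally, once $P$ and $Q$ have been modified within $M$ to share the same first press $v$, the suffixes $p_2\cdots p_n$ and $q_2\cdots q_n$ are successful pressing paths of the graph $G'$ obtained from $G$ by pressing $v$. Since $G'$ admits strictly shorter successful pressing paths, the inductive hypothesis connects these suffixes in the metagraph $M(G')$, and this connection lifts verbatim to a connection in $M(G)$ by prepending $v$ to every intermediate path. The main difficulty remains the exhaustive classification of the four-element braid moves: one must analyze small configurations such as triangles, length-$3$ oriented paths, and pairs of adjacent black vertices with a shared neighbor, and show that every obstruction to commutation can be resolved by rewriting a window of at most four presses. The linear-graph result proved in Section~\ref{sec:proof} is expected to serve as a template for this classification, since along a linear graph the local interactions between adjacent oriented edges already exhibit the full combinatorial richness that makes the constant $4$ tight.
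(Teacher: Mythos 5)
There is a genuine gap, and it is worth being precise about where it lies. The statement you are proving is stated in the paper as a \emph{conjecture}: the paper itself does not prove it, and only establishes the special case of linear graphs (Theorem~\ref{theo:main}), where the stronger bound $2$ replaces $4$. Your proposal correctly reproduces the inductive skeleton that the paper uses in the linear case --- bubble a chosen vertex to the front of both paths, then recurse on the graph $Gv \setminus \{v\}$ obtained by pressing it --- but the entire mathematical weight of the argument rests on what you call the ``bridging lemma'' and the ``exhaustive classification of the four-element braid moves,'' both of which you explicitly defer (``the main obstacle,'' ``the main difficulty remains''). That deferred step is not a routine verification; it is exactly the open content of the conjecture. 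A proof outline whose central lemma is announced but not proved is not a proof.

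Moreover, there is a concrete reason to doubt that the deferred classification is a finite check of ``small configurations such as triangles and length-$3$ paths.'' When a vertex $v$ is pressed in a general graph, the adjacency of \emph{every pair} of its neighbors is toggled, so an obstruction to commuting $v$ past an adjacent pressed vertex is not confined to a bounded window of the graph: the rewrite propagates through neighborhoods of unbounded size, and the set of local configurations to classify is not finite. The paper's linear-graph proof avoids this precisely because of Observation~\ref{obs:self-red} (pressing a vertex of a linear graph yields a linear graph plus an isolated white vertex) and because degree at most $2$ keeps the case analysis to three cases (Cases 1, 2a, 2b); the paper's own discussion states that the analogue of Case 2b ``seems to be very complicated for general graphs'' and is the missing piece even with linear-algebraic machinery. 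Your proposal also has a secondary gap: the bridging lemma presumes a pressable black vertex ``common to the first few positions of both $P$ and $Q$,'' but in general the first vertex of $Q$ need not occur anywhere in $P$; handling a vertex absent from one path requires a separate replacement argument (the paper supplies one only for linear graphs, where the endgame configuration forcing the absence is a single connected pair). Until you can prove the window-rewriting lemma for arbitrary graphs --- with the unbounded-neighborhood propagation controlled --- the induction has nothing to stand on.
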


The conjecture means that with small alterations, we can transform any
pressing path into any other pressing path, whatever $G$ is. The small
alteration means that we remove at most $4$, not necessary consecutive
vertices from a pressing path, and add at most $4$ vertices, not
necessarily to the same places where the old vertices were removed
from, and not necessarily to consecutive places. Although it is
generally not true that only the pressing paths are the reversal
sorting paths of a signed permutation, as there might be
cycle-increasing reversals not acting on a desire edge, for a class of
permutations, it is true. Specially, if a signed permutation is such
that in its graph of desire and reality each cycle contains only one
or two desire edges, then all cycle-increasing reversals act on desire
edges. These signed permutations are the permutations that can be
considered in the so-called infinite site model \cite{maetal2008}. 

In this paper, we prove the pressing game conjecture for linear
graphs. Actually, we can prove more, the metagraph will be already
connected if we require that neighbor vertices have longest common
subsequence at most $2$ less than the common length of their pressing paths.

\section{Proof of the Conjecture on Linear Graphs}\label{sec:proof}

The proof of our main theorem is recursive, and for this, we need the
following notations. Let $G$ be a black-and-white graph, and $v$ a
black vertex in it. Then $Gv$ denotes the graph we get by pressing
vertex $v$. Similarly, if $P$ is a valid pressing path of $G$ (namely, each
vertex is black when we want to press it, but $P$ does not necessary
yield the all-white, empty graph), then $GP$ denotes the graph we get
after pressing all vertices in $P$ in the indicated order. Finally,
let $P^k$ denote the suffix of $P$ starting in position $k+1$.

The simplicity of the linear graphs is that they have a simple
structure and furthermore, the pressing game on linear graphs is
self-reducible as the following observation states.
\begin{observation}\label{obs:self-red}
Let $G$ be a linear black-and-white graph and $v$ a black vertex in
it. Then $Gv$ is also a linear graph and the separated white vertex $v$.
\end{observation}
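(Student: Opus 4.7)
The plan is a direct case analysis on the degree of $v$ in $G$. Since $G$ is linear, $\deg(v)\in\{0,1,2\}$, and in each case I would apply the three clauses (a), (b), (c) of the press operation in turn and read off the resulting graph.

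The degree-$0$ case is immediate: clauses (a) and (b) are vacuous, and (c) just recolors $v$ white. The degree-$1$ case is almost as easy: $v$ is an endpoint of the path, so (a) flips the color of its unique neighbor, (b) is vacuous because $v$ has no pair of neighbors, and (c) detaches $v$ as an isolated white vertex; what remains is the path with one endpoint deleted, which is still linear.

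The only case with any content is $\deg(v)=2$, where $v$ sits between two path-neighbors $u_1$ and $u_2$. Here I would first note that in a linear graph $u_1$ and $u_2$ cannot already be adjacent, as otherwise $u_1,v,u_2$ would form a triangle — impossible in a path. So clause (b) actually \emph{creates} the edge $u_1u_2$, clause (a) toggles the colors of $u_1$ and $u_2$, and clause (c) separates $v$, which in particular destroys the edges $u_1v$ and $u_2v$. The net effect on the edge set is to replace the length-two segment $u_1 {-} v {-} u_2$ by the single edge $u_1{-}u_2$, splicing the two halves of the path into one shorter path on the same vertices minus $v$. This is a linear graph, together with the separated white $v$, as claimed.

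I do not anticipate any real obstacle beyond two small bookkeeping points: verifying in the $\deg(v)=2$ case that $u_1,u_2$ are genuinely non-adjacent (which is where linearity is actually used), and making sure clauses (b) and (c) do not interfere — but (b) only flips connectivity of pairs of neighbors of $v$ and never touches edges incident to $v$, while (c) removes exactly the edges incident to $v$, so the newly created $u_1u_2$ edge survives. Combining the three cases yields the observation.
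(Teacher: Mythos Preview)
Your case analysis is correct and complete; the only point worth noting is that the paper itself offers no proof of this observation at all --- it is simply stated and immediately used. Your argument therefore supplies strictly more than the paper does, and the degree-$2$ case (where linearity is genuinely needed to guarantee $u_1u_2\notin E$ before the press) is exactly the right place to spend the one nontrivial sentence.
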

Since any separated white vertex does not have to be pressed again,
it is sufficient to consider $Gv \setminus \{v\}$, which is a linear
graph. We are ready to state and prove our main theorem.

\begin{theorem}\label{theo:main}
Let $G$ be an arbitrary, finite, linear black-and-white graph, and let $M$ be the
following graph. The vertices of $M$ are the successful pressing
paths on $G$, and two vertices are connected if the length of the
longest common subsequence of the pressing paths they represent is at
most $2$ less than the common length of the pressing paths. Then $M$
is connected.
\end{theorem}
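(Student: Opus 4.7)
The plan is to prove the theorem by induction on $|V(G)|$, using Observation~\ref{obs:self-red} throughout; the base cases with $|V(G)| \leq 2$ are straightforward. For the inductive step, let $P = p_1 \ldots p_n$ and $Q = q_1 \ldots q_n$ be two successful pressing paths of $G$. In the easy case $p_1 = q_1 = v$, the tails $P^1$ and $Q^1$ are successful pressing paths on the smaller linear graph $Gv \setminus \{v\}$, and the inductive hypothesis provides a connecting sequence in its metagraph; prepending $v$ to each path yields a connecting sequence in $M$ between $P$ and $Q$.

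The substantive case is $p_1 \neq q_1$, which I would reduce to the easy case via the following first-move swap lemma: for any linear graph $G$, any successful pressing path $P$, and any vertex $w$ that begins some successful pressing path of $G$, there is a successful pressing path $P'$ with $p'_1 = w$ whose longest common subsequence with $P$ has length at least $n - 2$. Applied with $w = q_1$, this produces a neighbor of $P$ in $M$ sharing a first move with $Q$, which the easy case then connects to $Q$. The main strategy for the lemma is bubbling: if $w = p_k$ appears in $P$, I would try to swap $w$ leftward through $P$ by adjacent transpositions. In a linear graph, two presses of vertices non-adjacent in the current reduced graph commute (neither press affects the other vertex's color or its incident edges), so any such transposition is a valid successful pressing path and is an edge of $M$, since the longest common subsequence drops by exactly one.

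The main obstacle is the step at which bubbling first encounters a pair adjacent in the current reduced linear graph: a naive transposition then fails, because pressing the preceding vertex flipped $w$'s color and is precisely what made $w$ pressable. Resolving this requires exploiting the rigid local structure of the reduced graph --- still a disjoint union of paths by iterated application of Observation~\ref{obs:self-red}, so that the two conflicting vertices together with their at most one external neighbor each form a short, explicit local segment. I would perform a finite case analysis on the colors and adjacencies in this window to exhibit a direct substitution of at most two positions of $P$ that yields a valid successful pressing path in which $w$ has been advanced past the obstruction. The subcase where $w$ does not appear in $P$ at all --- which is possible, since different successful pressing paths on the same graph may press different vertex sets --- requires an analogous local replacement argument, and together with the adjacent-swap case forms the technical heart of the proof. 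Finally, the disconnected case of $G$ is handled by noting that presses in distinct components commute entirely, so pressing paths of $G$ are interleavings of pressing paths of its components, and the connectedness of $M$ reduces to that of the component metagraphs combined with the fact that any two interleavings of the same words are related by adjacent transpositions of inter-component pairs, each of which is an edge of $M$.
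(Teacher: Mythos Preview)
Your high-level architecture --- induction on $|V(G)|$, reduce to the case $p_1 = q_1$ by showing that $q_1$ can be brought to the front of $P$ via moves in $M$ --- is exactly the paper's approach. But there are two problems, one a misstatement and one a genuine gap.

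First, your ``first-move swap lemma'' is stated too strongly. You claim a single $P'$ with $p'_1 = w$ that is an $M$-\emph{neighbor} of $P$ (LCS $\geq n-2$). Your own strategy does not prove this: bubbling $w$ leftward one transposition at a time produces a \emph{path} in $M$, not a single edge, and there is no reason the endpoints of that path should have LCS $\geq n-2$. What the paper actually proves (and what suffices) is that $P$ is \emph{connected in $M$} to some path beginning with $w$; it does this by showing only that $w$ can always be moved to a strictly smaller index via a finite sequence of $M$-edges, then iterating.

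Second, and more seriously, your treatment of the obstruction is too optimistic. When the bubbling stalls at $P = P_1\, a\, w\, P_2$ with $a$ adjacent to $w$ in $GP_1$, you propose a local case analysis on the short graph segment around $a$ and $w$ and a ``direct substitution of at most two positions.'' But the information needed to resolve the obstruction does not live in that window: since $w$ is white in $GP_1$ yet black in $G$, some \emph{earlier} vertex $b$ in $P_1$ was adjacent to $w$ when pressed and flipped its color. The paper's resolution hinges on locating the \emph{last} such $b$, showing that nothing between $b$ and $a$ in $P_1$ is adjacent to $b$ (otherwise $w$ would have been flipped again), and bubbling $b$ down so that the path reads $\ldots b\, a\, w\, P_2$. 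Only then is a bounded case analysis possible, and even then the case $P_2 = \emptyset$ (the tail is exactly $b\, a\, w$) requires a further nontrivial argument: one must back up through $P_1$ peeling off pairs of neighbors of $w$ until a non-neighbor $u$ is found, reorder the intervening left- and right-neighbors of $w$, and then perform one or two carefully chosen two-position substitutions. None of this is captured by ``the two conflicting vertices together with their at most one external neighbor each,'' because the vertex $b$ that unlocks the obstruction is already a separated white vertex in the current reduced graph and hence invisible in your local segment.
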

\begin{proof}
It is sufficient to show that for any successful pressing paths $X$
and $Y = v_1 v_2 \ldots v_k$ there is a series $X_1, X_2, \ldots
X_m$ such that for any $i =1, 2, \ldots m-1$, the length of the
longest common subsequence of $X_i$ and $X_{i+1}$ is at most $2$ less
than the common length of the paths, and $X_m$ starts with
$v_1$. Indeed, then both $X_m$ and $Y$ starts with $v_1$, and both
$X_m^1$ and $Y^1$ are successful pressing paths on $G v_1 \setminus
\{v_1\}$. We can use the induction to transform $X_m$ into a pressing
path which starts $v_2$, then we consider its suffix which is a
successful pressing path on $G v_1 v_2 \setminus \{v_1,v_2\}$, etc. 

Furthermore, to show that $v_1$ can be moved to the first position to
the current pressing path, it is sufficient to show that it can be
moved towards the first position with some series of allowed
alterations of the path.

The first question is if $v_1$ is in $X$. $X$ is a successful
pressing path of $G$ and $v_1$ is a black vertex in $G$ (since it is
the first vertex in the valid pressing path $Y$). Then either $v_1$
is pressed or it become a separated white vertex by pressing a
neighbor of $v_1$. Since $G$ is a linear graph, the only possibility
for the later case is that the remaining linear part of $G$ contains
two vertices: $v_1$ and some $u$, both of them are black and
connected, and $u$ is pressed in the pressing path. But then pressing
$v_1$ instead of $u$ has the same effect. Replacing $u$ to $v_1$ in
the pressing path means that the length of the longest common
subsequence is one less than the common length of the paths.

{\it Case 1.} So from now we assume that $v_1$ is part of the current pressing path,
which we denote by $P_1 w_1 v_1 P_2$, both $P_1$ and $P_2$ might be
empty. If $w_1$ and $v_1$ are not neighbors in $GP_1$, then $P_1 v_1
w_1 P_2$ is also a valid pressing path, and one of the longest common 
subsequences of $P_1 w_1 v_1 P_2$ and $P_1 v_1 w_1 P_2$ is $P_1 w_1
P_2$, one vertex less then the original pressing paths. In this way, we
can move $v_1$ to a smaller index position in the pressing path, and
this is what we want to prove.

{\it Case 2.} If $w_1$ and $v_1$ are neighbors, then $v_1$ is white in $G P_1$,
and then $w_1$ makes it black again. However, $v_1$ is black in
$G$, since it is the first vertex in the valid pressing path $Y$. Then
there have to be at least one vertex in $P_1$ that made $v_1$
white. Let $w_2$ be the last such vertex in $P_1$, and let we denote
$P_1 = P_{1a} w_2 P_{1b}$. We claim that none of the vertices in
$P_{1b}$ are neighbors of $w_2$ in $G P_{1a}$. Indeed, if there were a
neighbor of $w_2$ in $P_{1b}$, denote it by $w_3$, then $w_3$ would
become a neighbor of $v_1$ after pressing $w_2$, and then pressing
$w_3$ would make $v_1$ black, and then either $v_1$ was black before
pressing $w_1$, a contradiction, or there were further vertices in
$P_{1b}$ making $v$ white, contradicting that $w_2$ is the last such
vertex. Since $P_{1b}$ does not contain a vertex which is a neighbor
of $w_2$ in $G P_{1a}$, we can recursively bubble down $w_2$ next to
$w_1$. We get that the pressing path is now $P_1 w_2 w_1 v_1 P_2$,
where $P_1$ is now a different pressing path, and possibly empty, and
$P_2$ might also be empty. The topology and the colors of $w_2$, $w_1$
and $v_1$ in $G P_1$ is one of the following:

\begin{center}
\includegraphics[angle=0, width=3in]{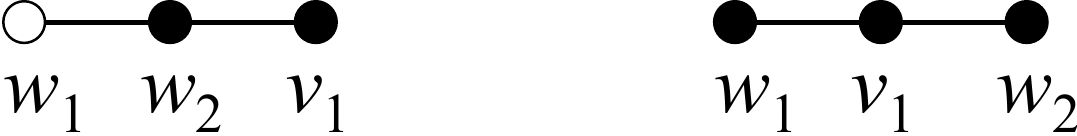}
\end{center}

{\it Case 2a}. Assume that $P_{2}$ is not empty, then the $\{w_1,
w_2, v_1\}$ triplet has at least one neighbor, call it $u$, and $u$
either is pressed in $P_{2}$, or we can replace a vertex in $P_{2}$
with $u$ such that it is still a successful pressing path on $G P_{1}
w_2 w_1 v_1$. So we can assume that at least one neighbor of the $\{w_1,
w_2, v_1\}$ triplet is pressed in $P_{2}$. It is easy to see that the
neighbors of the $\{w_1, w_2, v_1\}$ triplet changes their color in the
same way by pressing only $v_1$ and pressing $w_2 w_1 v_1$, see
Figure~\ref{fig:alter}. Therefore we can press $v_1$ instead of $w_2
w_1 v_1$, and the pressing path $P_{2}$ will be still valid up to the
point when $u_1$ or $u_2$ is pressed. Assume that $u_1$ is pressed
before $u_2$ in $P_{2}$, and $P_{2} = P_{2a} u_1 P_{2b}$
Figure~\ref{fig:finishing2a} shows
that the color of $u_2$  and a possible second neighbor of $u_1$
denoted by $u_3$ will be the same in $G P_1 w_2 w_1 v_1 P_{2a}
u_1$ and $G P_1 v_1 P_{2a} u_1 w_1 w_2$. Therefore $P_1 v_1 P_{2a} u_1 w_1
w_2 P_{2b}$ will be also a successful pressing path on $G$, since no
more vertices are affected by the given alteration of the pressing
path. One of the longest common subsequences of $ P_1 w_2 w_1 v_1
P_{2a} u_1 P_{2b}$ and $P_1 v_1 P_{2a} u_1 w_1 w_2 P_{2b}$ is $P_1 v_1
P_{2a} u_1 P_{2b}$, $2$ vertices less than the entire pressing
paths. $v_1$ is in a smaller index position of the pressing path, and
this is what we wanted to prove. The case when $u_2$ is pressed first
in $P_2$ is similar to the discussed case.

\begin{figure}
\center{
\includegraphics[angle=0, width=5in]{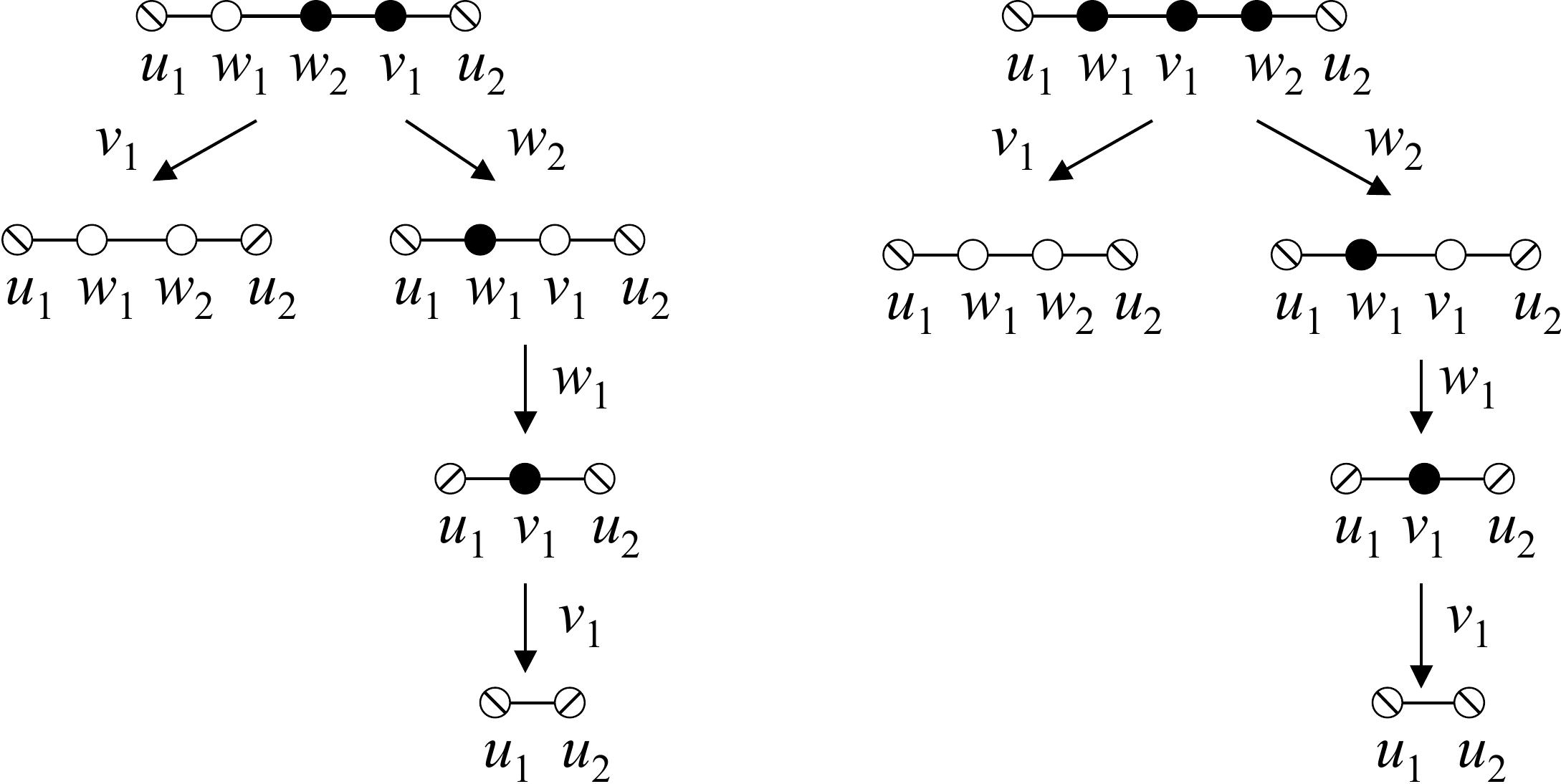}}
\caption{On the indicated two configurations, the neighbors of the
  $w_1,w_2,v_1$ triplet, $u_1$ and $u_2$ changes color in the same way
by pressing only $v_1$ and pressing $w_2 w_1 v_1$. The color change on
$u_1$ and $u_2$ is indicated with the flipping of their crossing line.}\label{fig:alter}
\end{figure}

\begin{figure}
\center{
\includegraphics[angle=0, width=3in]{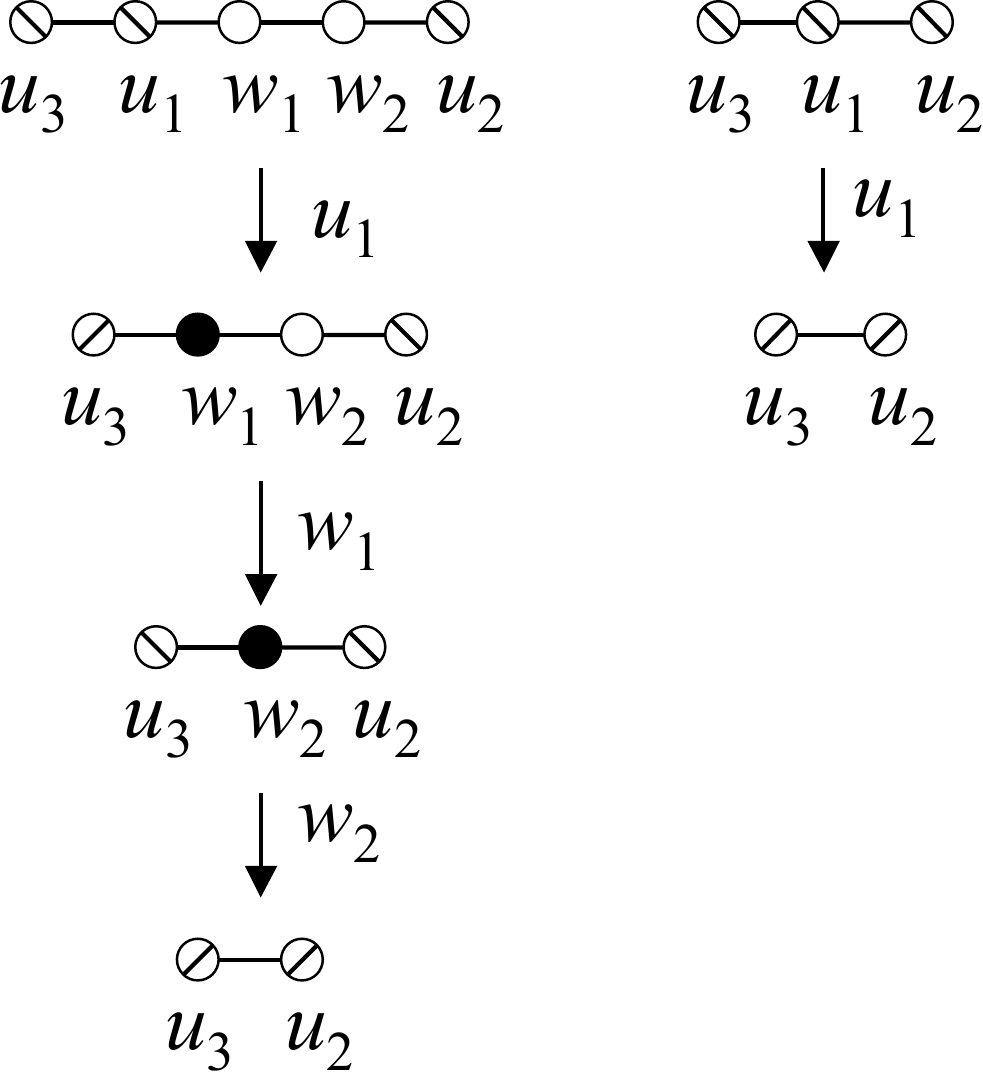}}
\caption{The color of $u_2$ and $u_3$ changes in the same way on the
  two indicated configurations. See text for details.}\label{fig:finishing2a}
\end{figure}

{\it Case 2b} Finally, assume that $P_2$ is empty. This means that
the $w_1, w_2, v_1$ triplet might have at most one more vertex that
becomes a separated white vertex when $v_1$ is pressed. This
additional vertex is white if a neighbor of $w_1$ or $w_2$ and black
if it is a neighbor of $v_1$ (it can be only when $w_2$ is a neighbor
of $w_1$.

Then $P_1$ cannot be empty, otherwise $w_2 w_1 v_1$ would be the only
successful pressing path, contradicting that a successful pressing
path exists that starts with $v_1$. 

If the last vertex in $P_1$ is a neighbor of $v_1$ when it is pressed,
then it makes $v_1$ black, namely, before pressing the last vertex in
$P_1$, $v_1$ is white. However, $v_1$ is black in $G$, so there has to
be further vertices in $P_1$ changing the color of $v_1$. The last
vertex in $P_1$ making $v_1$ white can be bubbled down to the last but
one position of $P_1$ just as we did with $w_2$. Let $P'$ be the path 
obtained from path $P_1$ in this way, excluding the last two
vertices. Then the graph $G P'$ contains the black vertex $v_1$, all
of its neighbors are black, and all further vertices are white. In
this graph, $v_1$ cannot be the first vertex of a successful pressing
path, since pressing it would create an all-white non-trivial
component. Then further vertices must be in $P'$. If the last vertex
of $P'$ is a neighbor of $v_1$, we can do the same thing, creating a
path $P''$ such that $G P''$ contains the black vertex $v_1$, all of
its neighbors are black, and all other vertices are white.

Since there is a successful pressing path which starts with $v_1$
after separating down a few -- possibly $0$ -- couples of vertices
from $P_1$, we have to find a vertex, call it $u$, which is not a
neighbor of $v_1$. Let the so-emerging pressing path be $P_{1a} u
P_{1b} v_1$. Note that we also incorporate $w_1$ and $w_2$ into
$P_{1b}$. The vertices in $P_{1b}$ are all neighbors of $v_1$ when
pressed, and at least one of them are neighbor of $u$. Let the left
neighbors of $v_1$ be denoted by $x_1, x_2 \ldots x_k$ and the let the
right neighbors be denoted by $y_1, y_2, \ldots y_l$. Without loss of
generality we can assume that $u$ is in the left neighbors (swap left
and right if this was not the case). Obviously, any $x$ is not a
neighbor of $y$, so we can rearrange them in $P_{1b}$ such that first
the $y$ vertices are pressed then the $x$ vertices. After a finite
number of allowed alterations, $P_{1b} = y_1 y_2 \ldots y_l x_1 x_2
\ldots x_k$ and $G P_{1a}$ is 

\begin{center}
\includegraphics[angle=0, width=4in]{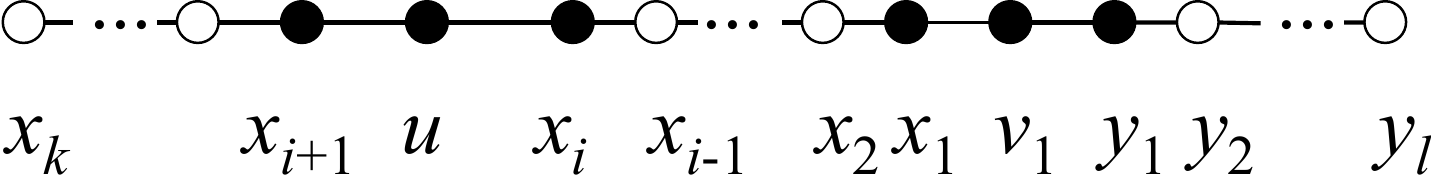}
\end{center}

Similarly, we move down vertex $u$ before $x_i$ in the pressing
path. We consider the graph $G P_{1a} y_1 \ldots y_l x_1 \ldots
x_{i-1}$ if $v_1$ is black in it (the runs of $x$ vertices might be
empty if $i=1$), and otherwise the graph $G P_{1a} y_1 \ldots y_l x_1
\ldots x_{i-2}$ (also the runs of $x$ vertices might be empty if
$i=2$) or $G P_{1a} y_1 \ldots y_{l-1}$ if $i = 1$. We have one of the
following graphs

\begin{center}
\includegraphics[angle=0, width=2.5in]{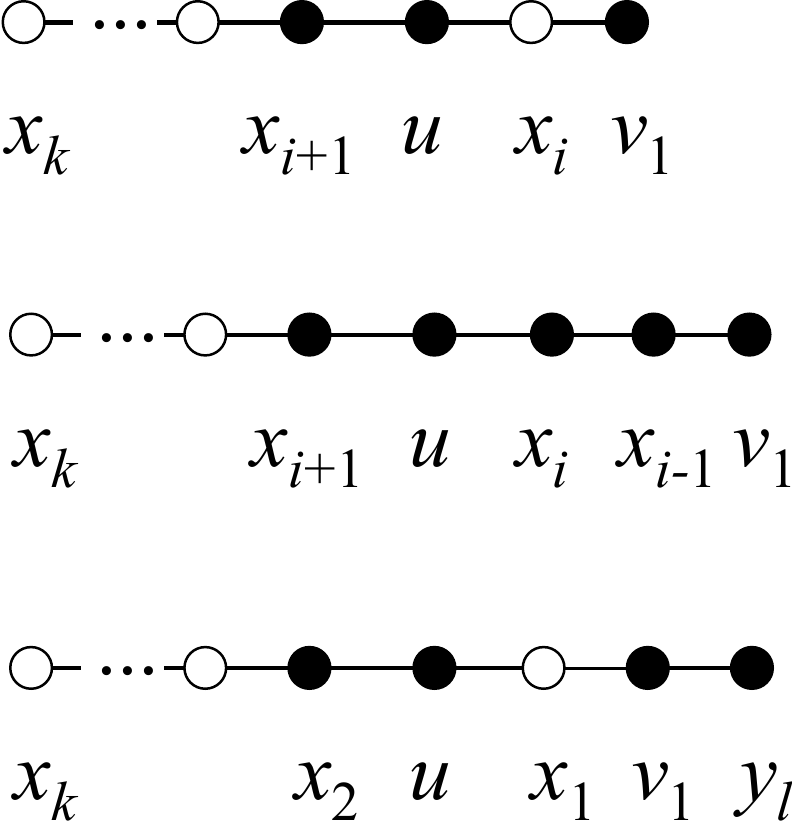}
\end{center}

on which $u x_i \ldots x_k v_1$, $u x_{i-1} \ldots x_k v_1$, $y_l u
x_1 \ldots x_k v_1$ is the current successful pressing path,
respectively.

A successful pressing path replacing $u x_i \ldots x_k v_1$ is
$v_1 x_i \ldots x_k u$, as can be seen on the left hand side of
Figure~\ref{fig:alt2a}. The length of the longest common subsequence
of the two pressing paths is $2$ less than their common length, as required.
\begin{figure}
\center{
\includegraphics[angle=0, width=4in]{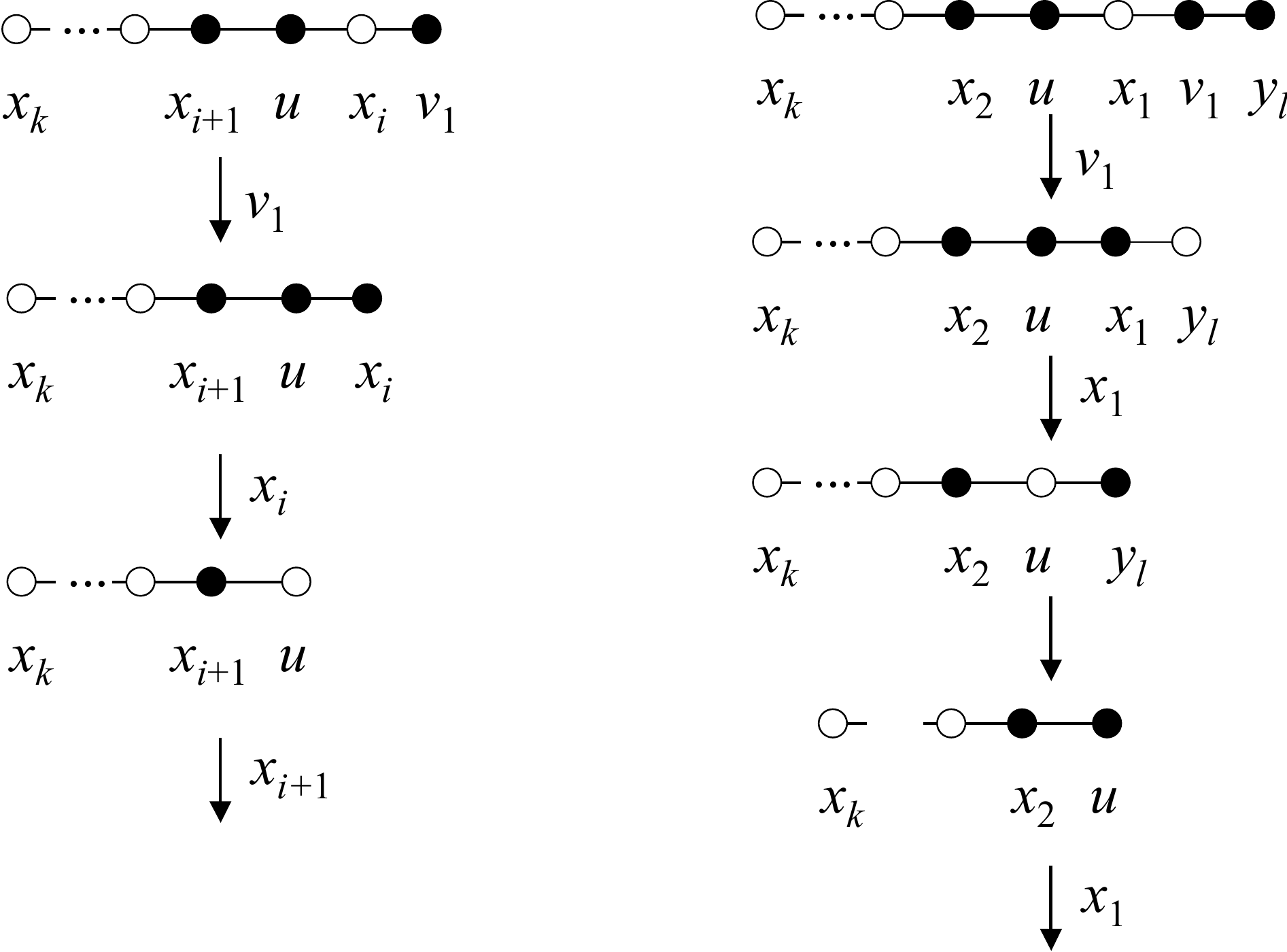}}
\caption{Alternative pressing paths for two cases. See text for details.}\label{fig:alt2a}
\end{figure}
The pressing path $y_l u x_1 \ldots x_k v_1$ can be replaced to $u x_1
y_l x_2 \ldots x_k v_1$ since $y_l$ is a neighbor neither $u$ nor
$x_1$. Then this pressing path can be replaced to $v_1 x_1 y_l x_2
\ldots x_k u$, as can be seen on the right hand side of
Figure~\ref{fig:alt2a}. The length of the longest common subsequence
of $u x_1 y_l x_2 \ldots x_k v_1$ and $v_1 x_1 y_l x_2 \ldots x_k u$
is again $2$ less than their common length.

Finally, the pressing path $u x_{i-1} \ldots x_k v_1$ can be replaced
in two steps, first it is changed to $x_i x_{i+1} u x_{i-1}
x_{i+2}\ldots x_k v_1$, then to $x_i x_{i+1} v_1 x_{i-1} x_{i+2}\ldots
x_k u$, as can be checked on Figure~\ref{fig:alt2b}. In both setps,
the length of the longest common subsequences of two consecutive
pressing paths is $2$ less than their common length as required.
\begin{figure}
\center{
\includegraphics[angle=0, width=4in]{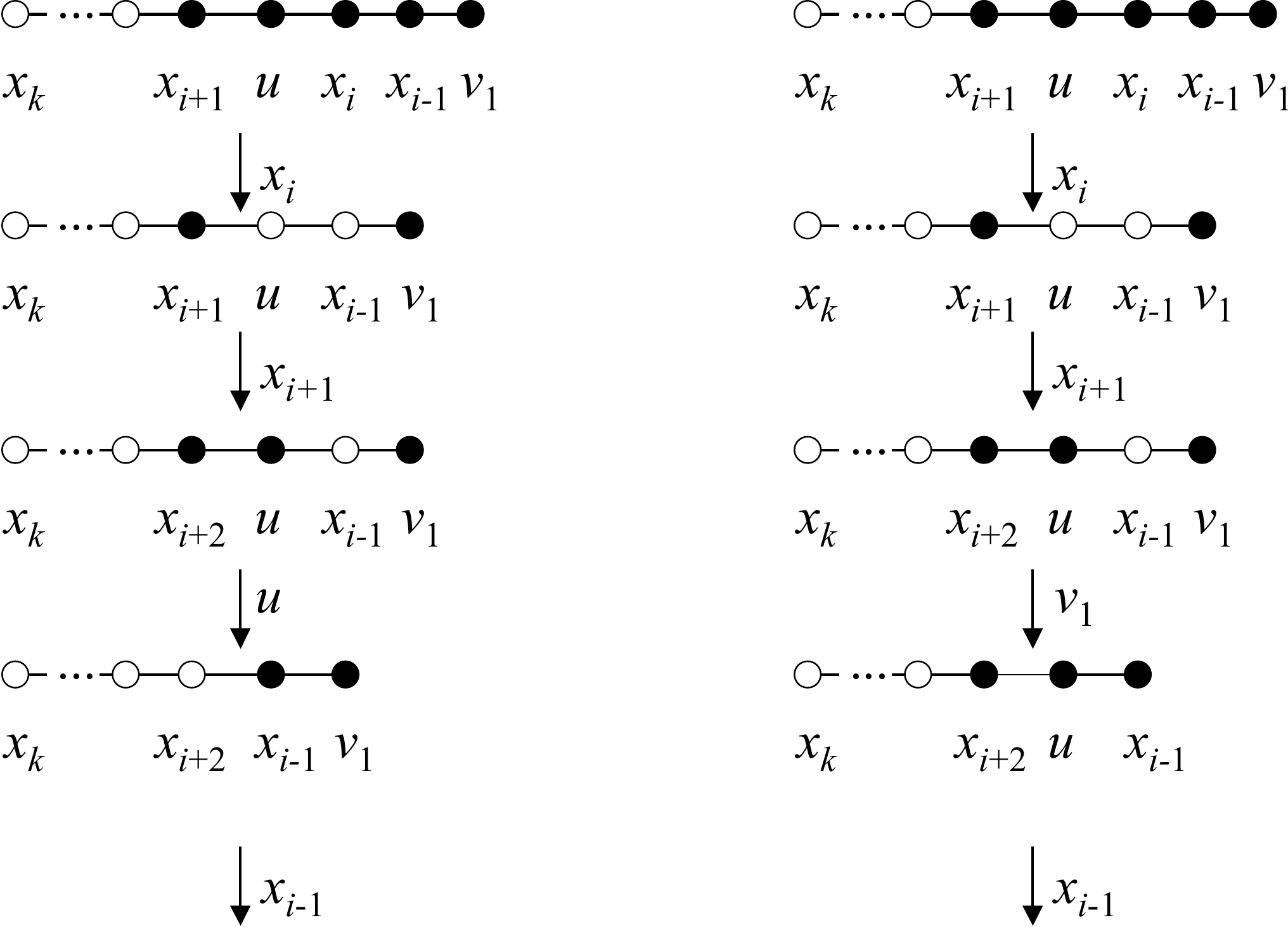}}
\caption{Changing the pressing path $u x_{i-1} \ldots x_k v_1$ in two
  steps such that $v_1$ is in a smaller index position. See text for details.}\label{fig:alt2b}
\end{figure}

We proved that in any case, $v_1$ can be moved into a smaller index
position with a finite series of allowed perturbations. Iterating
this, we can move $v_1$ to the first position. Then we can do the same
thing with $v_2$ on the graph $G v_1 \setminus \{v_1\}$, and eventually transform $X$
into $Y$ with allowed perturbations.

\end{proof}

\section{Discussion and Conslusions}\label{sec:conclusions}

In this paper, we proved the pressing game conjecture for linear
graphs. Although the linear graphs are very simple, the proving
technique shows a direction how to prove the general case. Indeed, it
is generally true that if a vertex $v$ is not in a successful pressing
path $P$, then a successful pressing path $P'$ exists which contains
$v$ and the length of the longest common subsequence of $P$ and $P'$
is only $1$ less than their common length. Case 1 in the proof of
Theorem~\ref{theo:main} holds for arbitrary graphs, and in a working
manuscript, we were able to prove that the conjecture is true for Case
2a using a linear algebraic techniques similar to that one used in
\cite{hv2006}. The only missing part is Case 2b, which seems to be
very complicated for general graphs.

A stronger theorem holds for the linear case that is conjectured for
the general case. One possible direction above proving the general
conjecture is to study the emerging Markov chain on the solution space
of the pressing game on linear graphs. We proved that a Markov chain
that randomly removes two vertices from the current pressing path,
adds two random vertices to it, and accepts it if the result is a
successful pressing path is irreducible. It is easy to set the jumping
probabilities of the Markov chain such that it converges to the
uniform distribution of the solutions. The remaing question is the
speed of convergence of this Markov chain. 

\section*{Acknowledgements}

I.M. was supported by OTKA grant PD84297. Alexey Medvedev is thanked
for fruitful discussions.

\end{document}